\journalname{Theoretical computer science}
\newtheorem{proposition}{Proposition}[section]
\newtheoremstyle{prestyle}
{0} % Space above
{\topsep} % Space below
{\itshape} % Body font
{} % Indent amount
{\bfseries} % Theorem head font
{.} % Punctuation after theorem head
{.5em} % Space after theorem head
{} % Theorem head spec (can be left empty, meaning `normal')
\theoremstyle{prestyle}
\theoremstyle{definition}
\begin{document}

\begin{frontmatter}

%% Title, authors and addresses

%% use the tnoteref command within \title for footnotes;
%% use the tnotetext command for the associated footnote;
%% use the fnref command within \author or \address for footnotes;
%% use the fntext command for the associated footnote;
%% use the corref command within \author for corresponding author footnotes;
%% use the cortext command for the associated footnote;
%% use the ead command for the email address,
%% and the form \ead[url] for the home page:
%%
%% \title{Title\tnoteref{label1}}
%% \tnotetext[label1]{}
%% \author{Name\corref{cor1}\fnref{label2}}
%% \ead{email address}
%% \ead[url]{home page}
%% \fntext[label2]{}
%% \cortext[cor1]{}
%% \address{Address\fnref{label3}}
%% \fntext[label3]{}

\dochead{}
%% Use \dochead if there is an article header, e.g. \dochead{Short communication}
%% \dochead can also be used to include a conference title, if directed by the editors
%% e.g. \dochead{17th International Conference on Dynamical Processes in Excited States of Solids}

\title{Approximation Algorithm of Minimum All-Ones Problem for Arbitrary Graphs}

\author[1]{Chen Wang}
\ead{2120220677@mail.nankai.edu.cn}

\author[2]{Chao Wang\corref{cor1}}
\ead{wangchao@nankai.edu.cn}

\author[3]{Gregory Z. Gutin}
\ead{gutin@cs.rhul.ac.uk}

\author[4]{Xiaoyan Zhang}
\ead{xiaoyanice@aliyun.com}

\address[1]{Address: College of Software, Nankai University}
\address[2]{Address: College of Software, Nankai University}
\address[3]{Address: Royal Holloway, University of London}
\address[4]{School of Mathematical Science, Institute of Mathematics, Nanjing Normal University, Jiangsu 210023, China}

\cortext[cor1]{Corresponding author.}

%% use optional labels to link authors explicitly to addresses:
%% \author[label1,label2]{<author name>}
%% \address[label1]{<address>}
%% \address[label2]{<address>}

%\author[]{Chen Wang}
%\ead{2120220677@mail.nankai.edu.cn}
%\author[]{Chao Wang\corref{cor1}}
%\ead{wangchao@nankai.edu.cn}
%\address[mymainaddress]{Address: College of Software, Nankai University, Tianjin 300350, China}
%\cortext[cor1]{Corresponding author.}

\begin{abstract}
	%% Text of abstract	
	Let $G=(V, E)$ be a graph and let each vertex of $G$ has a lamp and a button. Each button can be of $\sigma^+$-type or $\sigma$-type.
	Assume that initially some lamps are on and others are off. The button on vertex $x$ is of $\sigma^+$-type ($\sigma$-type, respectively) if pressing the button
	changes the lamp states on $x$ and on its neighbors in $G$ (the lamp states on the neighbors of $x$ only, respectively). Assume that there is a set $X\subseteq V$ such that pressing buttons on vertices of $X$ lights all lamps on vertices of $G$. In particular, it is known to hold when initially all
	lamps are off and all buttons are of $\sigma^+$-type. 
	
	Finding such a set $X$ of the smallest size is NP-hard even if initially all
	lamps are off and all buttons are of $\sigma^+$-type. Using a linear algebraic approach we design a polynomial-time approximation algorithm for the problem such that for the set $X$ constructed by the algorithm, we have $|X|\le \min\{r,(|V|+{\rm opt})/2\},$ where $r$ is the rank of a (modified) adjacent matrix of $G$ and ${\rm opt}$ is the size of an optimal solution to the problem. 
	To the best of our knowledge, this is the first polynomial-time approximation algorithm for the problem with a nontrivial approximation guarantee. 
\end{abstract}

\begin{keyword}
	%% keywords here, in the form: keyword \sep keyword
	
	%% PACS codes here, in the form: \PACS code \sep code
	
	%% MSC codes here, in the form: \MSC code \sep code
	%% or \MSC[2008] code \sep code (2000 is the default)
	lamp lighting problem
	\sep minimum all-ones problem
	\sep approximation algorithm
	\sep complexity
	
\end{keyword}

\end{frontmatter}

%%
%% Start line numbering here if you want
%%
%\linenumbers

%% main text
\section{Introduction \label{s1}}

The all-ones problem is a fundamental problem in applied mathematics, first proposed by Sutner in 1988 \cite{Sutner1988Problem}. This problem has applications in linear cellular automata, as discussed in \cite{Sutner1989} and the references therein. To illustrate the problem, consider an $n\times n$ grid with each area having a light lamp and a switch, and every lamp is initially off. Turning the switch on in some area lights the lamp in the area and the lamps in neighboring areas. 
Is there a set $X$ of areas such that turning the switches on in $X$ will turn on all the lamps? This problem can be extended to all graphs and we will call it the {\em all-ones problem}. 
Sutner \cite{Sutner1989} proved that a solution $X$ exists for every graph. 
Later, several simple proofs of this result were given or rediscovered \cite{Caro96,Cowen99,Eriksson2001,Goldwasser97,Lossers1993}.

Many variants of the all-ones problem have been introduced and studied \cite{Barua96,broersma2007complexity,DodisW01,Eriksson2001,li2008general,li2005new,Sutner1990} over years. There are two important generalizations of the all-ones problem: (i) the initial state of lamps and switches can be arbitrary, i.e., some are on and the others are off, and (ii) every switch can be either of $\sigma^+$-type which changes the states of the lamp on its vertex and the lamps on the neighbors of its vertex or $\sigma$-type which changes the states of the lamps on the neighbors of its vertex only. As a result of these two generalizations, the {\em generalized all-ones problem} may not have a solution $X$ which lights all lamps. This generalized problem is studied in this paper. Under the condition that such a solution $X$ exists for the generalized all-ones problem, it is natural to ask for $X$ of minimum size. 

Unfortunately, this minimization problem is NP-hard even for all-ones problem \cite{Sutner1988}; we will call the minimization all-ones problem the {\em min all-ones problem}. Galvin and Lu both proved that the min all-ones problem of trees can be solved in linear time \cite{Galvin1989,lu2007minimum}. Building on this, Chen proposed an algorithm for solving the min generalized all-ones problem on trees, with linear complexity \cite{Chen2004}.  Manuel et al. provided solutions for some of the widely studied architectures, such as binomial trees, butterfly, and benes networks \cite{manuel2012all}. Fleischer and Yu provided a detailed survey of the generalized all-ones problem \cite{fleischer2013survey}. More recently, Zhang extended the all-ones problem to the all-colors problem, in which each lamp had other states besides being on and off, and obtained additional findings on the all-colors problem \cite{zhang2019solutions}.

Although significant research has been conducted on the all-ones problem on special graphs, such as trees, resulting in efficient algorithms, no polynomial-time approximation algorithms have been designed for the min all-ones problem on general graphs. Trees and cyclic graphs only represent a fraction of general graphs. In practical engineering scenarios, complex graphs are more common. In this paper, we design a polynomial-time approximation algorithm for the min generalized all-ones problem. If the problem has a solution, our algorithm outputs a solution $X$ such that $|X|\le \min\{r,(|V|+{\rm opt})/2\},$ where the rank of a (modified) adjacent matrix of $G$ and ${\rm opt}$ is the size of an optimal solution to the problem.

%, and it can be challenging to determine whether a graph is connected when the input is an adjacency matrix, making it impractical to solve it with customized algorithms. 
%In such cases, an approximation algorithm with polynomial complexity that is applicable to arbitrary graphs can provide relatively satisfactory answers at an acceptable cost. Assuming a general graph with $n$ nodes and an adjacency matrix rank of $r$, finding an optimal solution $min$ can be a daunting task. However, the relatively optimal solution $asn$ obtained by the approximation algorithm outperforms both $r$ and $(min+n)/2$. The complexity of the algorithm is $O(n^3)$, equivalent to traditional matrix multiplication and linear equation solving. Unlike algorithms that require the initial state to be all off, this algorithm is applicable to any initial state. Consequently, this approximation algorithm can efficiently process the minimum all-ones problem for arbitrary graphs in batches, resulting in satisfactory results on $F_2$.

Apart from the introduction, this paper contains three sections. In Section \ref{s2}, we introduce our approximation algorithm in detail. Section \ref{s3} shows the theoretical analysis and performance evaluation of this algorithm. Section \ref{s4} summarizes all the work of this paper and discusses future work.

\section{Approximation algorithm of min generalized all-ones problem\label{s2}}
\subsection{Linear algebraic formulation of min generalized all-ones problem\label{s21}}

It is not hard to see that the min generalized all-ones problem can be described as the following linear integer program over  $\mathbb{F}_2$. 
For an arbitrary graph $G=(V,E)$ with $V=\{v_1,\dots ,v_n\}$ we can get its modified adjacency matrix $A = (a_{ij})_{n\times n}$ such that for all $i\ne j$, $a_{ij}=1$ if $v_iv_j\in E$ and $a_{ij}=0$ otherwise, 
and for all $i\in \{1,2,\dots ,n\},$ $a_{ii}=1$ ($a_{ii}=0$, respectively) if  the switch on $v_i$ is of $\sigma^+$-type (of $\sigma$-type, respectively).
Combined with the initial state $B = (b_1,b_2,\cdots,b_n)$, where $b_i=0$ if the lamp on vertex $v_i$ is initially on and $b_i=1$ if the lamp is initially off, we can construct a system of linear equations $AU=B$ over $\mathbb{F}_2$. The solution to this problem is the minimum of $\sum U = \sum_{i=1}^{n}u_i$. 

Suppose the rank of $A$ is $r$ and the corank is $m$ so that $m+r=n$. If $a_{ii} = 1$ for all $i \in\{1,2,\cdots,n\} $, the system of equations $AU=B$ must have a solution, but if $a_{ii} = 0$, the system may not necessarily have a solution. However, as long as the system has at least one solution $\gammaup = (\gamma_1,\gamma_2,\cdots,\gamma_n)^T$, we can find all solutions of the system using the following system combining $\gammaup$ with 
the fundamental solution set $\etaup = (\eta_1,\eta_2,\cdots,\eta_m)$ within time $O(n^3)$. Here $x_i$ is the coefficient of the column vector $\eta_i=(\eta_{1i},\dots , \eta_{ni})^T$.
\begin{equation}
\label{e1}
\begin{split}
	\etaup X + \gammaup 
	& =
	\begin{Bmatrix}
		\eta_{11} & \eta_{12} & \cdots & \eta_{1m}\\
		\eta_{21} & \eta_{22} & \cdots & \eta_{2m}\\
		\eta_{31} & \eta_{32} & \cdots & \eta_{3m}\\
		\vdots & \vdots  & \ddots & \vdots\\
		\eta_{n1} & \eta_{n2} & \cdots & \eta_{nm}\\
	\end{Bmatrix}
	\begin{Bmatrix}
		x_{1} \\ x_{2} \\ \vdots \\x_{m}
	\end{Bmatrix}
	+
	\begin{Bmatrix}
		\gamma_{1} \\ \gamma_{2}\\ \gamma_{3} \\ \vdots \\\gamma_{n}
	\end{Bmatrix}
\end{split}
\end{equation}

The problem is how to find the appropriate column vector $X$ to minimize $\sum U$, under the condition that $X$ has a total of $2 ^ m$ values. This problem was proven to be an NP-complete \cite{Sutner1988}. %Even if we obtain these $2^m$ solutions, we cannot find the minimum solution without using them one by one. 
Therefore, the next subsection provides an approximation algorithm running in polynomial time.

\subsection{Approximation algorithm\label{s22}}
Firstly, it can be observed that the polynomial time complexity (not exceeding $O(n^3)$)  of finding the matrix $(\eta_1,\eta_2,\cdots,\eta_m)$ and the special solution $\gammaup$ makes this process cost-effective in solving NP-complete problems. Secondly, it is challenging to identify alternative methods capable of directly computing the optimal solution without obtaining all the solutions. Even if such a solution is obtained, verification is often infeasible. %Thus, we \GG{point out} that solving the linear equation system is an indispensable step.
%Then, this problem can be transformed into the following equation. 
When $\etaup$ and $\gammaup$ are known, we need to find the $X$ that minimizes $\sum U$.
\begin{equation}
\label{e2}
\begin{split}
	\etaup X + \gammaup 
	& =
	\begin{Bmatrix}
		\eta_{11} & \eta_{12} & \cdots & \eta_{1m}\\
		\eta_{21} & \eta_{22} & \cdots & \eta_{2m}\\
		\eta_{31} & \eta_{32} & \cdots & \eta_{3m}\\
		\vdots & \vdots  & \ddots & \vdots\\
		\eta_{n1} & \eta_{n2} & \cdots & \eta_{nm}\\
	\end{Bmatrix}
	\begin{Bmatrix}
		x_{1} \\ x_{2} \\ \vdots \\x_{m}
	\end{Bmatrix}
	+
	\begin{Bmatrix}
		\gamma_{1} \\ \gamma_{2}\\ \gamma_{3} \\ \vdots \\\gamma_{n}
	\end{Bmatrix} \\
	& =
	\begin{Bmatrix}
		\delta_{1} \\ \delta_{2}\\ \delta_{3} \\ \vdots \\\delta_{n}
	\end{Bmatrix}
	+
	\begin{Bmatrix}
		\gamma_{1} \\ \gamma_{2}\\ \gamma_{3} \\ \vdots \\\gamma_{n}
	\end{Bmatrix} = 
	\begin{Bmatrix}
		u_{1} \\ u_{2}\\ u_{3} \\ \vdots \\ u_{n}
	\end{Bmatrix}
\end{split}
\end{equation}

\begin{proposition}
Row exchanges of matrix $\etaup$ do not change $\sum U$.
\end{proposition}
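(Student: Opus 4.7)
The strategy is straightforward book-keeping: interpret a ``row exchange of $\etaup$'' as a consistent relabeling of the vertex indices, and then observe that $\sum U$ is a symmetric function of the coordinates of $U$. Concretely, suppose we swap rows $i$ and $k$ of $\etaup$. Since the entries of the system \eqref{e2} are indexed by the vertices of $G$, the only meaningful way to perform this swap is to apply the same transposition to $\gammaup$ (and hence to $U$); otherwise the $i$-th equation would no longer correspond to the $i$-th lamp, and even the simple example $(\eta_i\!\cdot\! X,\eta_k\!\cdot\! X,\gamma_i,\gamma_k)=(1,0,1,0)$ shows that $\sum U$ would in fact change. Under this consistent swap, the augmented system is just a reindexing of the original one.

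After the reindexing, the new output vector $U'$ is obtained from $U$ by the transposition of coordinates $i$ and $k$, because the $\ell$-th entry of $\etaup X+\gammaup$ depends only on the $\ell$-th row of $\etaup$ together with $\gamma_\ell$. Since $\sum U=\sum_{\ell=1}^{n} u_\ell$ (viewed as an integer count of ones, which is what matters for the minimization) is invariant under permutation of its summands by commutativity of addition, we obtain $\sum U'=\sum U$. Iterating over arbitrary sequences of transpositions extends the statement to any permutation of rows.

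There is no substantive obstacle here; the only point worth pausing on is the semantic one above, namely that a ``row exchange'' is to be read as an exchange in the augmented matrix $[\etaup\mid\gammaup]$ rather than in $\etaup$ alone. Once this is fixed, the proposition is an immediate symmetry observation and serves mainly to license the author to reorder rows freely in the algorithmic manipulations of $\etaup$ that follow.
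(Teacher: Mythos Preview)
Your proof is correct and matches the paper's argument: the paper multiplies both sides of the system by a permutation matrix $P$, obtaining $P(\etaup X+\gammaup)=PU$, and notes that $PU$ is just a reordering of $U$ so $\sum U$ is unchanged. Your version spells this out via transpositions and makes explicit the point (implicit in the paper's ``multiply both sides'') that the row exchange must be applied to $\gammaup$ as well.
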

\begin{proof}
Multiply both sides of Equation \ref{e2} by matrix $P$ as shown in Equation \ref{e22}. Matrix $P$ is a product of elementary matrices that perform row exchanges. This operation essentially reorders the elements of vector $U$, but does not change the $\sum U$.
\begin{equation}
	\label{e22}
	P (\etaup X + \gammaup) = P (\deltaup + \gammaup) = P U
\end{equation}
\end{proof}

\begin{proposition}
Column transformation of matrix $\etaup$ does not change $\sum U$.
\end{proposition}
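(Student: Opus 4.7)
The plan is to mirror the strategy used for Proposition 2.1, but applied on the right instead of the left. Any elementary column transformation of $\etaup$ (swapping two columns, or adding one column to another over $\mathbb{F}_2$) corresponds to right-multiplication by an invertible elementary matrix; a composition of such transformations therefore corresponds to right-multiplication by some invertible $m\times m$ matrix $Q$ over $\mathbb{F}_2$. Writing $\etaup' = \etaup Q$, Equation \ref{e2} becomes
\begin{equation*}
\etaup' X + \gammaup \;=\; \etaup (QX) + \gammaup \;=\; U.
\end{equation*}

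The key step is then a straightforward change of variables. Setting $Y = QX$, the map $X \mapsto QX$ is a bijection on $\mathbb{F}_2^m$ because $Q$ is invertible. Hence, as $X$ ranges over all of $\mathbb{F}_2^m$, so does $Y$, and the set of vectors $U$ realizable by $\etaup' X + \gammaup$ coincides exactly with the set realizable by $\etaup Y + \gammaup$. In particular, the collection of possible values of $\sum U$ is identical before and after the column transformation, so the minimum of $\sum U$ is unchanged.

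The only point requiring a brief comment is that $\etaup' = \etaup Q$ is still a legitimate fundamental solution set in the sense of Section \ref{s21}: since $A\etaup = 0$ and $Q$ is invertible, the columns of $\etaup Q$ still lie in the null space of $A$ and remain linearly independent, hence still form a basis. I do not anticipate a significant obstacle — the argument reduces to noting that right-multiplication by an invertible matrix simply reparameterizes the free variables $X$, which is a purely algebraic observation parallel to the row-exchange argument in Proposition 2.1, with the bijection $X \leftrightarrow QX$ playing the role that the permutation $U \mapsto PU$ played there.
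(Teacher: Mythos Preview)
Your proposal is correct and follows essentially the same approach as the paper: introduce an invertible $m\times m$ matrix $Q$ representing the column transformation, perform the change of variables $X \leftrightarrow QX$ (the paper writes $X = QZ$ instead), and use the bijectivity to conclude that the set of attainable vectors $U$, and hence the minimum of $\sum U$, is unchanged. Your write-up is slightly more detailed than the paper's (you justify why column operations correspond to right-multiplication by an invertible matrix and remark that $\etaup Q$ remains a fundamental solution set), but the core argument is identical.
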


\begin{proof}
Let $Q_{m*m}$ be a full rank matrix, and $QZ=X$, with the following equation. 
\begin{equation}
	\label{e3}
	\begin{split}
		\etaup X + \gammaup 
		& =
		\etaup QZ + \gammaup  \\
		& =
		(\etaup Q)Z + \gammaup \\
		& =
		\epsilonup Z + \gammaup \\
		& =
		\deltaup + \gammaup \\
		& = U 
	\end{split}
\end{equation}
$Q$ is the transition matrix between $X$ and $Z$, and $Q$ is full rank. When we find that $X$ makes $\sum U$ the smallest, we can definitely find the corresponding $Z$, so that the obtained $U$ is the same.
\end{proof}

We can transform the $\etaup$ column into an echelon form using row exchanges and column transformations, as shown in the following equation, with a complexity of $O (m^2n)$. The question mark indicates that the value of the number is uncertain, which may be 0 or 1. We can divide the matrix into $m+1$ parts based on the echelon and assume the last line of the $i$-th part is line $k_i$ ($i = 0,1,\cdots,m$) for the rank of matrix $\etaup$ is always $m$. Part 0 is the most special, with all elements in each row being 0. To ensure that Equation \ref{e3} holds, there should be $(u1,u2,\cdots,u_{k_0})=(\gamma_1,\gamma_2,\cdots,\gamma_{k_0})$.
\begin{equation}
\label{e4}
\etaup Q = \epsilonup = 
\begin{Bmatrix}
	0 & 0 & 0 & 0 & \cdots & 0\\
	\vdots & \vdots & \vdots & \vdots & \ddots & \vdots\\
	1 & 0 & 0 & 0 & \cdots & 0\\
	1 & 0 & 0 & 0 & \cdots & 0\\
	\vdots & \vdots & \vdots & \vdots & \ddots & \vdots\\
	\epsilon_{(k_1+1)1} & 1 & 0 & 0 & \cdots & 0\\
	\epsilon_{(k_1+2))1} & 1 & 0 & 0 & \cdots & 0\\
	\vdots & \vdots & \vdots & \vdots & \ddots & \vdots\\
	\epsilon_{(k_2+1)1} & \epsilon_{(k_2+1)2} & 1 & 0 & \cdots & 0\\
	\epsilon_{(k_2+2)1} & \epsilon_{(k_2+2)2} & 1 & 0 & \cdots & 0\\
	\vdots & \vdots & \vdots & \vdots & \ddots & \vdots\\
	\epsilon_{k_m1} & \epsilon_{k_m2} & \epsilon_{k_m3} & \epsilon_{k_m4} & \cdots & 1\\
\end{Bmatrix}
\end{equation}
In the following $m$ parts, we will use greedy algorithms to solve for the $Z$ value on the Echelon of each part. Part 1 of the linear Equation \ref{e4} is shown in Equation \ref{e5}. $(\gamma_{k_0+1},\gamma_{k_0+2},\cdots,\gamma_{k_1})$ is known and $(\delta_{k_0+1},\delta_{k_0+2},\cdots,\delta_{k_1})$ is unknown. It is important to ensure that $\delta_i$ is as similar to $\gamma_i$ as possible. At this moment $z_1$ only has two possible values: 0 and 1. Therefore, the idea of a greedy algorithm is adopted here. If there are more 0's than 1's in the range from $\gamma_{k_0+1}$ to $\gamma_{k_1}$, then $z_1$ is set to 0. If there are more 1's than 0's, then $z_1$ is set to 1. Therefore, we can directly obtain the value of $x_1$ by solving it here, while ensuring that $\sum_{i=k_0+1}^{k_1}u_i  \leq (k_1 - k_0) / 2$.
\begin{equation}
\label{e5}
\begin{cases}
	&  z_1= \delta_{k_0+1}\\
	&  z_1= \delta_{k_0+2}\\
	&  z_1= \delta_{k_0+3}\\
	&  \ \ \ \ \ \ \vdots  \\
	&  z_1= \delta_{k_1}
\end{cases}
\ \ \ \ \ \text{Compare to} \ \ \ \ 
\begin{matrix}
	\gamma_{k_0+1} \\
	\gamma_{k_0+2} \\
	\gamma_{k_0+3} \\
	\vdots  \\
	\gamma_{k_1}
\end{matrix}
\end{equation}

The value of $z_2$ can be calculated through $z_1$. Part 2 of Equation \ref{e4} can be written as shown in Equation \ref{e6}. $(\gamma_{k_1+1},\gamma_{k_1+2},\cdots,\gamma_{k_2})$ is known, and $(\delta_{k_1+1},\delta_{k_1+2},\cdots,\delta_{k_2})$ needs to satisfy the Equation \ref{e4} and be as similar to $(\gamma_{k_1+1},\gamma_{k_1+2},\cdots,\gamma_{k_2})$ as possible. The variables in Equation \ref{e6} are $z_1$ and $z_2$, and $z_1$ has been solved before through a greedy algorithm, so the unknown variable is only $z_2$. Since $\epsilon_{i1}z_1$ are constants, we can move them from the left side of the equation to the right side, and these two equation systems are obviously equivalent. Then, we need to ensure that $\epsilon_{(k_1+i)1}z_1+\delta_{k_1+i}$ is as similar to $\gamma_{k_1+i}$ as possible. It can be seen that another transformation can be carried out, which is equivalent to making $\delta_{k_1+i}$ as similar to $\epsilon_{(k_1+i)1}z_1+\gamma_{k_1+i}$ as possible. In this way, we have separated the variables: the left side of the equation is the variable $z_2$, the right side of the equation is the variable $\delta_{k_1+i} (\delta_{k_1+i} = z_2)$, and the column of $\epsilon_{(k_1+i)1}z_1+\gamma_{k_1+i}$ are constants. At this point, we find that part 2 of Equation \ref{e4} has been transformed to be very similar to part 1. Therefore, if there are more 0's than 1's in the range from $\epsilon_{(k_1+i)1}z_1+\gamma_{k_1+1}$ to $\epsilon_{(k_2)1}z_1+\gamma_{k_2}$, then $z_2$ is set to 0. If there are more 1's than 0's, then $z_2$ is set to 1. Therefore, the value of $z_2$ can be solved here and $\sum_{i=k_1+1}^{k_2}u_i  \leq (k_2 - k_1) / 2$ is ensured.

After obtaining the value of $z_2$, the value of $\epsilon_{i1}z_1 + \epsilon_{i2}z_2$ can be calculated, and the value of $z_3$ can be calculated again. Following this pattern, the values of $Z = (z_1, z_2, \cdots, z_m)$ can be obtained. Then $\epsilonup Z + \gammaup = U$, we obtain $U$. The complete algorithm is shown in Algorithm \hyperref[a1]{1}.

\begin{equation}
\label{e6}
\begin{split}
	\begin{cases}
		&  \epsilon_{(k_1+1)1}z_1+z_2= \delta_{k_1+1}\\
		&  \epsilon_{(k_1+2)1}z_1+z_2= \delta_{k_1+2}\\
		&  \epsilon_{(k_1+3)1}z_1+z_2= \delta_{k_1+3}\\
		&  \ \ \ \ \ \ \vdots  \\
		&  \epsilon_{(k_2)1}z_1+z_2= \delta_{k_2} 
	\end{cases}
	\ \ \ \ \ \ \ \ \ \ &\text{Compare to} \ \ \ \ 
	\begin{matrix}
		\gamma_{k_1+1} \\
		\gamma_{k_1+1} \\
		\gamma_{k_1+1} \\
		\vdots  \\
		\gamma_{k_2}
	\end{matrix}
	\\\\\Downarrow\\\\
	\begin{cases}
		&  z_2= \delta_{k_1+1}+\epsilon_{(k_1+1)1}z_1\\
		&  z_2= \delta_{k_1+2}+\epsilon_{(k_1+2)1}z_1\\
		&  z_2= \delta_{k_1+3}+\epsilon_{(k_1+3)1}z_1\\
		&  \ \ \ \ \ \ \vdots  \\
		&  z_2= \delta_{k_2}+ \epsilon_{(k_2)1}z_1
	\end{cases}
	\ \ \ \ \ \ \ \ \ \ &\text{Compare to} \ \ \ \ 
	\begin{matrix}
		\gamma_{k_1+1} \\
		\gamma_{k_1+1} \\
		\gamma_{k_1+1} \\
		\vdots  \\
		\gamma_{k_2}
	\end{matrix}
	\\\\\Downarrow\\\\
	\begin{cases}
		&  z_2= \delta_{k_1+1}\\
		&  z_2= \delta_{k_1+2}\\
		&  z_2= \delta_{k_1+3}\\
		&  \ \ \ \ \ \ \vdots  \\
		&  z_2= \delta_{k_2}
	\end{cases}
	\ \ \ \ \ \ \ \ \ \text{Compare to}&  \ \ \ \ 
	\begin{matrix}
		\gamma_{k_1+1}+ \epsilon_{(k_1+1)1}z_1 \\
		\gamma_{k_1+2}+ \epsilon_{(k_1+2)1}z_1 \\
		\gamma_{k_1+3}+ \epsilon_{(k_1+3)1}z_1 \\
		\vdots  \\
		\gamma_{k_2}+ \epsilon_{(k_2)1}z_1
	\end{matrix}
\end{split}
\end{equation}

\begin{algorithm}[h]
\label{a1}
\caption{Approximation Algorithm of Minimum All-Ones Problem}
\SetAlgoLined
\KwData{An adjacency matrix $A_{n*n}$ and a initial state $B_{1*n}$}
\KwResult{Answer $U$}
$(\etaup,\gammaup,m)$ = solveEquations($A$,$B$)\;  %Solve equations $AX=B$, $m$ is the corank of $A$

\If{$m == 0$ and $\gammaup$ is null}{
	return null\;
}

\If{$m == 0$ and $\gammaup$ is not null}{
	return $\gammaup$\;
}

$(P,\epsilonup,Q)$ = matrixEchelon($\etaup, \gammaup$)\;

$K$ = calculatePart($\epsilonup$)\;

\For{$i$ from 1 to $m$}{
	$cnt,tmp = 0$\;
	
	\For{$j$ from $K[i-1]+1$ to $K[i]$}{
		\For{$p$ from 1 to $i-1$}{
			$tmp = tmp \oplus (\epsilonup[j][p]*X[p])$
		}
		$cnt = cnt + (tmp \oplus \gammaup[j])$;
	}
	\If{$cnt \leq K[i]-K[i-1])/2$}{
		$X[i] = 0$\;
	}\Else{
		$X[i] = 1$\;
	}
}
$U = P * (\epsilonup * X + \gammaup$)\;

return $U$\;
\end{algorithm}

\section{Algorithm performance evaluation\label{s3}}	
In this section, we present the complexity of Algorithm \hyperref[a1]{1} and analyze its approximation guarantees.

\begin{proposition}
Algorithm \hyperref[a1]{1} has a complexity of $O(n^3)$, and if the fundamental solution set $\etaup$ for the equation $AU=B$ has been obtained and is in column echelon form, then the complexity will reduce to $O(mn)$.
\end{proposition}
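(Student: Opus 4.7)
The plan is to walk through Algorithm 1 step by step and bound the cost of each line in terms of $n$ and $m$, then take the maximum.

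First I would analyze the preprocessing. The call \texttt{solveEquations($A$,$B$)} is Gaussian elimination over $\mathbb{F}_2$ to produce a particular solution $\gammaup$ and a fundamental solution set $\etaup$; this is the standard $O(n^3)$ routine already flagged in Section~\ref{s21}. The call \texttt{matrixEchelon($\etaup,\gammaup$)}, which reduces the $n\times m$ matrix $\etaup$ to column echelon form using row exchanges and column operations, was shown in Section~\ref{s22} to take $O(m^2 n)$. Finally \texttt{calculatePart}, which reads off the block boundaries $K[0],\dots,K[m]$ from $\epsilonup$, is a single pass costing $O(mn)$. So the preprocessing total is $O(n^3)+O(m^2 n)+O(mn)=O(n^3)$ since $m\le n$.

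Next I would bound the main greedy loop. For each $i\in\{1,\dots,m\}$ the inner $j$-loop runs $K[i]-K[i-1]$ times, and for each such $j$ the $p$-loop does $i-1$ XORs and multiplications. The total work is therefore
\begin{equation*}
\sum_{i=1}^{m}(K[i]-K[i-1])(i-1)\;\le\; m\sum_{i=1}^{m}(K[i]-K[i-1])\;=\;m\,K[m]\;\le\;mn,
\end{equation*}
giving $O(mn)$. The final line $U=P\cdot(\epsilonup X+\gammaup)$ is a matrix-vector product where $\epsilonup$ is $n\times m$ and $P$ is a permutation, so it costs $O(mn)$ as well.

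Combining all pieces, the overall running time is $O(n^3)+O(m^2n)+O(mn)=O(n^3)$, proving the first claim. For the second claim I would observe that when $\etaup$ is already supplied in column echelon form, the expensive $O(n^3)$ call to \texttt{solveEquations} and the $O(m^2n)$ call to \texttt{matrixEchelon} can be skipped; what remains is \texttt{calculatePart}, the greedy loop and the final multiplication, each $O(mn)$, yielding the claimed $O(mn)$ bound.

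I do not foresee a genuine obstacle here; the only place requiring a little care is the double sum bounding the greedy loop, where one must notice that the $(i-1)$ factor is simply bounded by $m$ and that the telescoping $\sum_i(K[i]-K[i-1])=K[m]\le n$ prevents an $O(m n^2)$ over-estimate. Everything else is a direct reading of the pseudocode.
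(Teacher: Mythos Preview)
Your proposal is correct and follows essentially the same step-by-step cost accounting as the paper's own proof: $O(n^3)$ for \texttt{solveEquations}, $O(m^2n)$ for \texttt{matrixEchelon}, $O(mn)$ for \texttt{calculatePart}, and $O(mn)$ for the greedy loop. Your telescoping-sum bound for the triple loop is a slightly more explicit version of the paper's one-line observation that each entry of $\epsilonup$ is touched at most once; note only that the telescope actually gives $m(K[m]-K[0])$ rather than $mK[m]$, which of course still yields $O(mn)$.
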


\begin{proof}
In Algorithm \hyperref[a1]{1}, step 1 involves solving a system of linear equations, which has a complexity of $O(n^3)$. Step 8 involves transforming the matrix $etaup$ into column echelon form, which has a complexity of $O(m^2n)$ where $m \leq n$. Step 2 to 7 is $O(1)$. Step 9 involves calculating the location of pivots in the column echelon matrix $\epsilonup$, which has a complexity of $O(mn)$. Steps 10 to 24 consist of a nested loop with three layers. However, each element in the matrix $\epsilonup$ is only accessed once, resulting in a complexity of $O(mn)$.
\end{proof}

\begin{proposition} If a given instance $\cal I$ of the min generalized all-ones problem has a solution, 
the value ${\rm sol}$ of the solution  obtained by Algorithm \hyperref[a1]{1} satisfies ${\rm sol}\le r$, where $r$ is the rank of the matrix $A$.
\end{proposition}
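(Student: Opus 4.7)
The plan is to bound the Hamming weight of the algorithm's output $U$ block-by-block over the $m+1$ parts of $\epsilon$, combining the greedy guarantees on parts $1, \ldots, m$ with a trivial bound on part $0$. I would first recall the structure of $\epsilon = \eta Q$ after the row exchanges and column transformations: its rows split into part $0$, consisting of $k_0$ entirely-zero rows, and parts $1, \ldots, m$, where part $i$ has $s_i := k_i - k_{i-1}$ rows whose last nonzero is in column $i$. Since $\eta$ has full column rank $m$ and the operations used to form $\epsilon$ preserve rank, each column of $\epsilon$ must contain a pivot row, giving the crucial inequality $s_i \geq 1$ for $i = 1, \ldots, m$. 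Moreover $k_0 + \sum_{i=1}^m s_i = n$.

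Next I would read off the per-part weight bounds. On part $0$ every $\delta_j$ vanishes, so $u_j = \gamma_j$ and this block contributes at most $k_0$ to $\sum U$. On part $i$ with $i \geq 1$, once $z_1, \ldots, z_{i-1}$ have been fixed the equation for each row $j$ of the part reduces to $u_j = z_i \oplus c_j$ for an explicit constant $c_j$, as derived around Equation (6); since the algorithm picks $z_i$ by majority vote, at most $\lfloor s_i/2 \rfloor$ rows of part $i$ end up with $u_j = 1$.

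Combining these bounds via the elementary inequality $\lfloor s_i/2 \rfloor \leq s_i - 1$, which holds precisely because $s_i \geq 1$, would yield
\[
{\rm sol} \;=\; \sum_{j=1}^n u_j \;\leq\; k_0 + \sum_{i=1}^m \lfloor s_i/2 \rfloor \;\leq\; k_0 + \sum_{i=1}^m (s_i - 1) \;=\; k_0 + (n - k_0) - m \;=\; n - m \;=\; r.
\]
The main point deserving care is the observation $s_i \geq 1$: without it the step $\lfloor s_i/2 \rfloor \leq s_i - 1$ would break (an all-zero nontrivial part would yield $0 \not\leq -1$), so the argument hinges on tying the block structure back to the full column rank of $\eta$. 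Everything else is routine arithmetic built on the per-part greedy bound that the algorithm description already establishes.
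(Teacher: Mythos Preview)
Your proof is correct and follows essentially the same approach as the paper: both arguments use that in each of the $m$ nonzero parts the greedy choice forces at least one coordinate $u_j$ to vanish (your $\lfloor s_i/2\rfloor \le s_i-1$ is just a quantitative rephrasing of the paper's ``each part has at least one $u_i$ equal to $0$''), giving at least $m$ zeros overall and hence $\mathrm{sol}\le n-m=r$. Your write-up is in fact more careful than the paper's, since you explicitly justify $s_i\ge 1$ from the full column rank of $\eta$, a point the paper leaves implicit.
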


\begin{proof}
In Equation \ref{e5} and \ref{e6}, if $\delta_i=\gamma_i$, then the resulting $u_i$ will be 0. In each part, we always make more $u_i$ equal to 0, so each part has at least one $u_i$  that takes on the value of 0. Furthermore, the rank of $\etaup$ is $m=n-r$ because $\etaup$ is the fundamental solution set of the system $AU=B$. Therefore, at least $m$ values of $u_i$ are 0, so $\sum U \leq n-m=r$.
\end{proof}

\begin{proposition}
If a given instance $\cal I$ of the min generalized all-ones problem has a solution, the value ${\rm sol}$ of the solution obtained by Algorithm \hyperref[a1]{1} satisfies $\rm sol \leq (n+{\rm opt})/2$, where ${\rm opt}$ is the value of an optimal solution of $\cal I$.
\end{proposition}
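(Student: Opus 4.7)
The plan is to decompose the total all-ones count $\sum U$ according to the $m+1$ parts of the echelon form $\epsilonup$ introduced in Equation \ref{e4}, and to bound each part separately for both the algorithm's output and for an arbitrary optimum.

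First I exploit Part $0$, i.e., the first $k_0$ rows of $\epsilonup$ that are identically zero. Because $(\epsilonup Z)_j = 0$ for every $Z$ and every $j \le k_0$, every feasible vector $U = \epsilonup Z + \gammaup$ must satisfy $u_j = \gamma_j$ for $j \le k_0$. Propositions 2.1 and 2.2 ensure that the row permutation $P$ and the invertible column change of variables $Q$ used to bring $\etaup$ to echelon form preserve $\sum U$ and biject feasible $X$'s with feasible $Z$'s, so $\mathrm{opt}$ is unchanged under these reductions. Setting $f := \sum_{j=1}^{k_0}\gamma_j$, the Part $0$ forcing gives $\mathrm{opt} \ge f$; and since each $\gamma_j \in \{0,1\}$, one has the trivial upper bound $f \le k_0$.

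Next, the greedy selection of $z_i$ in each Part $i \ge 1$, already analyzed when describing Algorithm \hyperref[a1]{1} (see Equations \ref{e5}--\ref{e6}), guarantees $\sum_{j=k_{i-1}+1}^{k_i} u_j \le (k_i - k_{i-1})/2$. Summing over $i = 1, \dots, m$ and using $k_m = n$ yields $\mathrm{sol} \le f + (n - k_0)/2$. Combining this with $\mathrm{opt} \ge f$ and $f \le k_0$ gives
\[
2\,\mathrm{sol} - \mathrm{opt} \;\le\; 2f + (n - k_0) - f \;=\; f + (n - k_0) \;\le\; k_0 + (n - k_0) \;=\; n,
\]
which rearranges to the required $\mathrm{sol} \le (n + \mathrm{opt})/2$.

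The main point to be careful about is the invariance of $\mathrm{opt}$ under $P$ and $Q$: $P$ merely permutes the entries of $U$ (and of $\gammaup$) without altering the sum, while $Q$, being invertible, gives a bijection between the solution sets parametrized by $X$ and by $Z$. Once this invariance is secured, the Part $0$ forcing together with the greedy bound on Parts $1, \dots, m$ combine directly into the claimed $(n + \mathrm{opt})/2$ approximation guarantee.
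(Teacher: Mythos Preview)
Your proof is correct and follows essentially the same line as the paper's: the paper writes $g_1$ and $g_0$ for the number of $1$'s and $0$'s among $\gamma_1,\dots,\gamma_{k_0}$ (so your $f=g_1$ and $k_0=g_0+g_1$), obtains $\mathrm{sol}\le g_1+(n-g_1-g_0)/2$, uses $\mathrm{opt}\ge g_1$ and $g_0\ge 0$, and concludes $\mathrm{sol}\le (n+\mathrm{opt})/2$. Your explicit appeal to Propositions~2.1 and~2.2 to justify that $\mathrm{opt}$ is invariant under $P$ and $Q$ is a welcome clarification that the paper leaves implicit.
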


\begin{proof}
In the Subsection \ref{s22}, we partitioned the matrix $\etaup$ into $m+1$ parts and proved that for the $1$ to $m$ parts, $\sum_{k_i+1}^{k_{i+1}}u \leq (k_{i+1} - k_i) / 2$. Only the 0th part remains to be discussed. The 0th part is quite special in that it contains no variables, only differing in the value of $\gamma$. Let the number of 0's in $\gamma$ in the 0th part be $g_0$ and the number of 1's be $g_1$. $g_0$ indicates that the switch at that point must not be pressed; otherwise, the conditions for the all-ones problem cannot be satisfied. Similarly, $g_1$ indicates that the switch must be pressed. Now we have: 
\begin{equation}
	\label{e7}
	\rm sol \leq g_1 + (n-g_1-g_0)/2 = (n + g_1-g_0)/2
\end{equation}
Then add the parameter ${\rm opt}$. We can easily prove that ${\rm sol}\ge {\rm opt}\ge g_1$, because the switches for these points must be pressed in any case. So we have
\begin{equation}
	\label{e8}
	g_1 \leq {\rm opt} \leq \rm sol \leq (n + g_1-g_0)/2
\end{equation}
Next, we will bound $\rm sol$ by replacing $g_1$ with $\rm opt$ and $g_0$ with 0, resulting in the following expression:
\begin{equation}
	\label{e8}
	{\rm sol} \leq (n+{\rm opt})/2
\end{equation}

\end{proof}

\begin{figure}[h]
\center
\includegraphics[width=0.6\linewidth]{./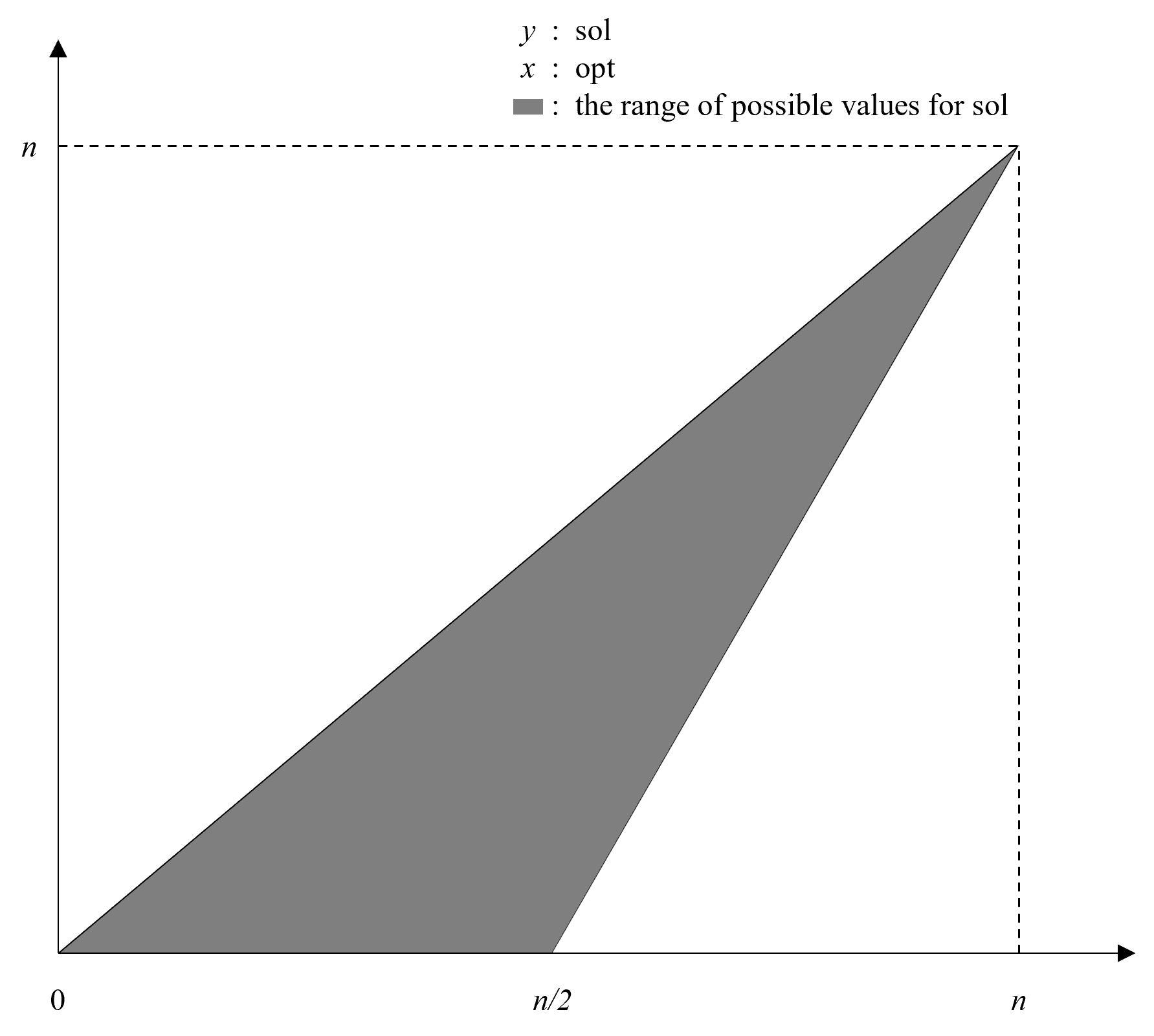}
\caption{The range of possible values for ${\rm sol}$}
\label{f1}
\end{figure}

\section{Conclusion and future work\label{s4}}
This article presents an approximation algorithm for the min generalized all-ones problem on arbitrary graphs, making it possible to process the problem in batches. The algorithm has a complexity of $O(n^3)$. If the equation system $AU=B$ has been solved and the solution is in column echelon form, the complexity will be reduced to $O(n(n-r))$, which is the lowest complexity for general graphs. The upper bound of the solution value ${\rm sol}$ obtained by this algorithm satisfies the inequality ${\rm sol} \leq (n+{\rm opt})/2$ and ${\rm sol} \leq r$. This ensures that the obtained solution, as shown in Figure 1, is always the optimal half.

In future work, there still remain two questions to be solved. One of them is whether there is a polynomial-time algorithm for the min generalized all-ones problem which always finds a solution of size at most $c\dot {\rm opt}$ for some constant $c$?
The other one is whether we can get such an algorithm for the minimum all-colors problem?

%\begin{question}
%Can we extend Algorithm \hyperref[a1]{1} into $1+\epsilon$ form?
%\end{question}

%\begin{question}
%Can we extend Algorithm \hyperref[a1]{1} into ?
%\end{question}

%% The Appendices part is started with the command \appendix;
%% appendix sections are then done as normal sections
%% \appendix

%% \section{}
%% \label{}

%% References
%%
%% Following citation commands can be used in the body text:
%% Usage of \cite is as follows:
%%   \cite{key}         ==>>  [#]
%%   \cite[chap. 2]{key} ==>> [#, chap. 2]
%%

%% References with BibTeX database:

\bibliographystyle{elsarticle-harv}
\bibliography{fbref}

%% Authors are advised to use a BibTeX database file for their reference list.
%% The provided style file elsarticle-num.bst formats references in the required Procedia style

%% For references without a BibTeX database:

% \begin{thebibliography}{00}

%% \bibitem must have the following form:
%%   \bibitem{key}...
%%

% \bibitem{}

% \end{thebibliography}

\end{document}